\def\beq{\begin{equation}}
\def\eeq{\end{equation}}
\begin{document}
\newtheorem{corollary}{Corollary}
\newtheorem{definition}[corollary]{Definition}
\newtheorem{example}[corollary]{Example}
\newtheorem{lemma}[corollary]{Lemma}
\newtheorem{proposition}[corollary]{Proposition}
\newtheorem{theorem}[corollary]{Theorem}
\newtheorem{fact}[corollary]{Fact}
\newtheorem{property}[corollary]{Property}
\newtheorem{observation}[corollary]{Observation}
\newtheorem{question}[corollary]{Question}

\def\squareforqed{\hbox{\rlap{$\sqcap$}$\sqcup$}}
\def\qed{\ifmmode\squareforqed\else{\unskip\nobreak\hfil
\penalty50\hskip1em\null\nobreak\hfil\squareforqed
\parfillskip=0pt\finalhyphendemerits=0\endgraf}\fi}
\def\endenv{\ifmmode\;\else{\unskip\nobreak\hfil
\penalty50\hskip1em\null\nobreak\hfil\;
\parfillskip=0pt\finalhyphendemerits=0\endgraf}\fi}
\newenvironment{remark}{\noindent \textbf{{Remark~}}}{\qed}
\newcommand{\remarkTitle}[1]{\textbf{(#1)}}
\newcommand{\proofComment}[1]{\exampleTitle{#1}}
\newcommand{\bra}[1]{\langle #1|}
\newcommand{\ket}[1]{|#1\rangle}
\newcommand{\braket}[3]{\langle #1|#2|#3\rangle}
\newcommand{\ip}[2]{\langle #1|#2\rangle}
\newcommand{\op}[2]{|#1\rangle\!\langle #2|}
\newcommand{\tr}{{\operatorname{Tr}\,}}
\newcommand{\supp}{{\operatorname{supp}\,}}
\newcommand{\Sch}{{\operatorname{Sch}}}
\newcommand{\GHZ}{{\textrm{GHZ}}}
\newcommand{\slocc}{\stackrel{\textrm{\scriptsize SLOCC}}{\longrightarrow}}
\newcommand{\locc}{\stackrel{\textrm{\scriptsize LOCC}}{\longrightarrow}}
\newcommand{\rk}{{\operatorname{rk}}}
\newcommand{\sr}{{\operatorname{srk}}}
\newcommand{\pr}{{\operatorname{pr}}}
\newcommand{\E}{{\mathcal{E}}}
\newcommand{\F}{{\mathcal{F}}}
\newcommand{\h}{\mathcal{H}}
\newcommand{\diag} {{\rm diag}}
\newcommand{\nc}{\newcommand}
\nc{\ox}{\otimes}

\def\be{\begin{eqnarray}}
\def\ee{\end{eqnarray}}
\def\trace{\text{Tr}}
\def\lag{\langle}
\def\rag{\rangle}
\def\lt{\left}
\def\rt{\right}
\def\cf{\mathcal{F}}
\def\ch{\mathcal{H}}
\def\rmd{d}
\def\ci{\mathcal{I}}
\def\cc{\mathcal{C}}
\def\sig{\sigma}
\def\L{\Lambda}


\title{Invariant Perfect Tensors}

\author{Youning Li}%
\affiliation{Department of Physics, Tsinghua University, Beijing, People's Republic of China}%
\affiliation{Collaborative Innovation Center of Quantum Matter, Beijing 100190, People's Republic of China}%

\author{Muxin Han}%
\affiliation{Department of Physics, Florida Atlantic University, FL 33431, USA}%
\affiliation{Institut f\"ur Quantengravitation, Universit\"at Erlangen-N\"urnberg, Staudtstr. 7/B2, 91058 Erlangen, Germany}

\author{Markus Grassl}
\affiliation{Institut f\"ur Optik, Information und Photonik,
  Universit\"at Erlangen-N\"urnberg, 91058 Erlangen, Germany}%
\affiliation{Max-Planck-Institut f\"ur die Physik des Lichts,
  Leuchs Division, 91058 Erlangen, Germany}%

\author{Bei Zeng}
\affiliation{Department of Mathematics \& Statistics, University of
  Guelph, Guelph, Ontario, Canada}%
\affiliation{Institute for Quantum Computing, University of Waterloo,
  Waterloo, Ontario, Canada}%

\begin{abstract}
 Invariant tensors are states in the SU(2) tensor product
 representation that are invariant under the SU(2) action.  They play
 an important role in the study of loop quantum gravity. On the other
 hand, perfect tensors are highly entangled many-body quantum states
 with local density matrices maximally mixed.  Recently, the notion of
 perfect tensors recently has attracted a lot of attention in the
 fields of quantum information theory, condensed matter theory, and
 quantum gravity.  In this work, we introduce the concept of an
 invariant perfect tensor (IPT), which is a $n$-valent tensor that is
 both invariant and perfect.  We discuss the existence and
 construction of IPT.  For bivalent tensors, the invariant perfect
 tensor is the unique singlet state for each local dimension.  The
 trivalent invariant perfect tensor also exists and is uniquely given
 by Wigner's $3j$ symbol.  However, we show that, surprisingly, there
 does not exist four-valent invariant perfect tensors for any
 dimension.  On the contrary, when the dimension is large, almost all
 invariant tensors are perfect asymptotically, which is a consequence
 of the phenomenon of concentration of measure for multipartite
 quantum states.
\end{abstract}

\maketitle

\section{Introduction}
An invariant $n$-valent tensor $\psi$ is a state in the SU(2) tensor
product representation, and it is invariant under the SU(2) action.
Invariant tensors play a central role in the theory of Loop Quantum
Gravity (LQG) \cite{book,review,review1,rovelli2014covariant}, and
particularly the structure of Spin-Networks
\cite{Penrose,Rovelli:1995ac,Baez:1994hx}. The spin-network state, as
a quantum state of gravity, represents the quantization of geometry.
Classically an arbitrary three-dimensional geometry can be discretized
and built piece by piece by gluing polyhedral geometries.\footnote{The
  three-dimensional spatial geometry, quantized by spin-networks,
  serves as the initial data of four-dimensional gravity. } The
spin-network state quantizes the geometry made by polyhedra. As the
building block of spin-network, the $n$-valent invariant tensor
represents the quantum geometry of a polyhedron with $n$ faces
(explained in Appendix \ref{GeoInt}).

Briefly, the invariant tensor $\psi$ satisfies a quantum constraint
equation $\sum_{i=1}^n\mathbf{J}_i\psi=0$ where $\mathbf{J}_i$ denotes
the three SU(2) Lie algebra generators acting at the $i$-th tensor
component.  This equation is a quantum analog of the (flat) polyhedron
closure condition $\sum_{i=1}^n\vec{A}_i=0$ in three-dimensional
space,\footnote{The closure equation generalized to constant curvature
  polyhedron has been proposed in
  \cite{curvedMink,HHKR,Charles:2015lva}.} where each $\vec{A}_i$ is
the oriented area vector of the $i$-th polyhedron face. The one-to-one
correspondence between a flat geometrical polyhedron and $n$ vectors
$\vec{A}_1,\cdots,\vec{A}_n$ satisfying the closure condition is known
as the Minkowski Theorem \cite{Minkowski}.  Comparing the classical
and the quantum closure equations identifies $\mathbf{J}_i$ to be the
quantization of $\vec{A}_i$. The invariant tensor $\psi$ thus
represents a quantized geometrical polyhedron. In particular, each
polyhedron face area is quantized by
$(\vec{A}_i\cdot\vec{A}_i)^{1/2}\sim
(\mathbf{J}_i\cdot\mathbf{J}_i)^{1/2}=\sqrt{j_i(j_i+1)}$, which is the
famous quantum area spectrum in LQG \cite{Rovelli1995,ALarea}. The
invariant tensor $\psi$ with fixed $j_i$ for each component is a
quantum parameterization of the shapes of polyhedra with fixed face
areas \cite{shape}. The three-dimensional quantum geometry is
constructed by collecting a large number of invariant tensors
representing different quantum geometrical polyhedra. It corresponds
to the kinematics of four-dimensional quantum gravity.

On the other hand, another notion of special tensors, known as
\emph{perfect tensors}, has recently attracted a lot of attention from
researchers in quantum information theory, condensed matter theory,
and quantum
gravity~\cite{hosur2015chaos,Pastawski:2015qua,Almheiri:2014lwa}. In
this work, we consider $n$-valent tensors with dimension $d$ for each
component (i.e., $n$-qudit quantum states).  In this setting, the
perfect tensor is a highly entangled many-body quantum state, where
its reduced density matrix of any part of the system, involving up to
half of the total number of particles of the system, is maximally
mixed.

In terms of quantum error-correcting codes, a perfect tensor is a code
with large code distance that is half of the system size.  Intimate
connections between quantum error-correcting codes, perfect tensors,
information scrambling in chaotic many-body quantum systems, and
systems with holographic duals, have recently been revealed.

Perfect tensors have been employed to construct the \emph{Tensor
  Network} as a Conformal Field Theory (CFT) ground state, which
realizes the AdS/CFT correspondence
\cite{Pastawski:2015qua,Bhattacharyya:2016hbx}.  In particular, the
perfect tensor network provides an interesting illustration of how the
Ryu-Takayanagi formula of Holographic Entanglement Entropy
(HEE)\footnote{In the context of bulk-boundary duality, Ryu-Takayanagi
  Formula conjectures that the CFT entanglement entropy of a spatial
  region $A$ is proportional the minimal area of the bulk codim-2
  surface attached to $\partial A$ \cite{Ryu:2006bv}.} emerges from
many body quantum system.

Furthermore, recently it has been shown that perfect tensors represent
quantum channels which are of strongest quantum chaos
\cite{hosur2015chaos}.  The quantum transition defined by perfect
tensors turns out to maximally scramble the quantum information such
that the initial state cannot be recovered by local measurements. In
\cite{hosur2015chaos} it was also suggested that a perfect tensor
should represent the holographic quantum system dual to the bulk
quantum gravity with a black hole.\footnote{The recent AdS/CFT
  computation reveals that a black hole should be dual to a quantum
  system of fastest scrambling \cite{Maldacena:2015waa}, which is
  consistent with the scrambling feature of perfect tensors.}

Given that invariant tensors and perfect tensors relate to quantum
gravity from different perspectives, it is then highly desired to
incorporate the idea of perfect tensors with that of invariant
tensors, a new concept that we call it Invariant Perfect Tensor (IPT).
This work is also motivated by the recent result in \cite{HanHung}, in
which the tensor network and HEE Ryu-Takayanagi formula emerge from
LQG spin-network with invariant tensors.

The existence and construction of invariant tensors or perfect tensors
are to some extent well understood.  Moreover, among bivalent tensors,
i.e., bipartite quantum states, the existence of IPT is also
understood, which is nothing but the spin singlet state.  We will show
that, among trivalent tensors, invariant perfect tensors can also be
constructed uniquely from Wigner's $3j$ symbol.  However, the existence
and constructions of IPT have been unknown for $n>3$.  As a surprising
result, we show that there does not exist any IPT for $n=4$, for any
local dimension $d$.  On the other hand, however, a random $4$-valent
invariant tensor is nearly perfect for large $d$.  In other words,
random invariant tensors also demonstrate a similar behavior of
concentration of measure of generic quantum states, although the
entropy convergence rate to the maximum possible value turns out to be
slower.  Our method and results also shed light on more general
structure of IPT of $n>4$.

We organize our paper as follows: in Section~\ref{sec:pre}, we introduce
basic notations and preliminaries on SU(2) representations.  In
Section~\ref{sec:tri}, we discuss the construction of $3$-valent IPT
using Wigner's $3j$ symbol. In Section~\ref{sec:nogo}, we prove a no-go
theorem that there does not exist $4$-valent IPT tensor.  In
Section~\ref{sec:ram}, we discuss random $4$-valent invariant tensors and
show that they are nearly perfect in the large dimension $d$
limit. Finally, a brief discussion will be given in
Section~\ref{sec:dis}.

\section{Notations and Preliminaries}
\label{sec:pre}
A multipartite quantum system of $n$-particles has a Hilbert space
$\mathcal{H}_n=\otimes_{i=1}^n V_{j_i}$, where each $V_{j_i}$ is a
spin-${j_i}$ with dimension $d_i=2j_i+1$. The spin angular momentum
operators have commutation relations given by
$[J^a,J^b]=i\epsilon^{abc}J^c$. An $n$-valent tensor is a vector
$\ket{\psi_n}$ in $\mathcal{H}_n$.

Let the total spin operator be
\begin{equation}
\mathbf{J}=\sum_{i=1}^n \mathbf{J}^i.
\end{equation}
An $n$-valent tensor $\ket{\psi_n}$ is invariant if it satisfies
\begin{equation}
\mathbf{J}\ket{\psi_n}=0.\label{invtensor}
\end{equation}

In the tensor product $\otimes_{i=1}^n V_{j_i}$ of $n$ $SU(2)$
irreducible representations $V_{j_i}$ labeled by spins
$j_1,\cdots,j_n$, the dimension of the subspace
$\mathrm{Inv}(\otimes_{i=1}^n V_{j_i})$ spanned by the invariant
states is given by the following formula \cite{Verlinde:1988sn}
\begin{alignat*}{5}
&\dim\left[\mathrm{Inv}(\otimes_{i=1}^n V_{j_i})\right]\nonumber\\
&\quad=\frac{2}{\pi}\int_0^{\pi} d \theta\, \sin^2(\theta/2)\prod_{i=1}^n\frac{\sin((j_i+\frac{1}{2})\theta)}{\sin(\theta/2)}.
\end{alignat*}
For invariant $n$-qudit states, take $j_1=\cdots=j_n=j$ with $d=2j+1$.

For adding angular momentums, we use the standard Clebsch-Gordan
coefficients (CGCs) that are written as
\begin{equation}
C^{\, j_1\,\, j_2}_{m_1\, m_2\, J\, M}=\langle j_1m_1;j_2m_2\ket{JM}.
\label{3j-def}
\end{equation}
We also use Wigner's $3j$ symbol that is given in terms of CGCs as
\begin{alignat}{5}
&\begin{pmatrix}
    j_1 & j_2 & j_3 \\
    m_1 & m_2 & m_3
  \end{pmatrix}\nonumber\\
&\qquad=\frac{(-1)^{j_1-j_2-m_3}}{\sqrt{2j_3+1}}\langle j_1m_1;j_2m_2\ket{j_3m_3}.
\end{alignat}

Note that in order the $3j$ symbol is nonzero, the spins $j_1,j_2,j_3$
have to satisfy the triangle inequality:
\[
|j_1-j_2|\leq j_3\leq j_1+j_2.
\]
It leads to the geometrical interpretation of $3j$ symbol as a triangle
in two-dimensional Euclidean space, whose three edge lengths are $j_1,j_2,j_3$.

The $3j$ symbol can be chosen to be purely real, and it is invariant
under an even permutation of its columns:
\begin{alignat}{5}
\begin{pmatrix} j_1 &
  j_2 & j_3\\ m_1 & m_2 & m_3
\end{pmatrix}
&{}=
\begin{pmatrix}
  j_2 & j_3 & j_1\\
  m_2 & m_3 & m_1
\end{pmatrix}\nonumber\\
&{}=
\begin{pmatrix}
  j_3 & j_1 & j_2\\
  m_3 & m_1 & m_2
\end{pmatrix}.
\end{alignat}
Moreover, the $3j$ symbol obeys the following orthogonality relation
\begin{alignat}{5}
&\sum_{m_1 m_2}
\begin{pmatrix}
  j_1 & j_2 & j\\
  m_1 & m_2 & m
\end{pmatrix}
\begin{pmatrix}
  j_1 & j_2 & j'\\
  m_1 & m_2 & m'
\end{pmatrix}\nonumber\\
&\qquad{}=\frac{1}{(2j+1)}\delta_{j j'}\delta_{m m'}.
\end{alignat}

An $n$-qudit invariant state is an $n$-valent tensor with
$d_i=2j_i+1=d$. An $n$-qudit state/tensor $\ket{\psi_n}$ is perfect if
for any bipartition, whose number of particles $k$ in the smaller part
satisfies $1\leq k\leq \lfloor n/2\rfloor$, the entropy of the reduced
density matrix is maximal.  An $n$-qudit state $\ket{\psi_n}$ is an
invariant perfect tensor (IPT) if it is both invariant and
perfect. Our goal is to study the existence and construction of IPTs
for $n$-qudit states.

\section{\boldmath Three-valent IPT: Wigner's $3j$ Symbols}
\label{sec:tri}

We consider three-valent IPTs in this section, and we find that for
$n=3$ there is a unique invariant tensor of SU(2) (up to a rescaling),
which is also a perfect tensor.

Consider a tensor product $\otimes_{i=1}^3 V_{j_i}$ of three SU(2)
irreducible representations labeled by spins $j_1,j_2,j_3$. It is
well-known that the subspace $\mathrm{Inv}(\otimes_{i=1}^3 V_{j_i})$
of invariant tensors is one-dimensional in the case of rank three.
The normalized invariant tensor is given by Wigner's $3j$ symbol:
\[
(\psi_3)^{j_1,j_2,j_3}_{m_1,m_2,m_3}=\begin{pmatrix}
    j_1 & j_2 & j_3 \\
    m_1 & m_2 & m_3
  \end{pmatrix},
\]
where the indices $m_1,m_2,m_3$ transform under SU(2) in $j_1,j_2,j_3$
representations, respectively.

Now consider the state $\ket{\psi_3}$ given by
\begin{equation}
\label{eq:tri}
\sum_{m_1,m_2,m_3}\kern-3mm(\psi_3)^{j_1,j_2,j_3}_{m_1,m_2,m_3} |j_1 m_1 \rangle |j_2 m_2\rangle |j_3 m_3\rangle.
\end{equation}
To get an invariant tensor satisfying $\mathbf{J}|\psi_3\rangle =0$,
we need to have the $3j$ symbols given by the coefficients with which
three angular momenta added to zero.  We now show that in this case,
$\ket{\psi_3}$ is also perfect.

For any choice of two spins $j_1,j_2$ out of $j_1,j_2,j_3$, we define
the reduced density matrix
$\rho_3=\tr_{j_1,j_2}|\psi_3\rangle\langle\psi_3|$. The orthogonality
relation implies that
\begin{equation}
\langle j_3,m_3|\rho_3|j_3m_3'\rangle=\frac{1}{(2j_3+1)}\delta_{m_3,m_3'},\label{cong}
\end{equation}
and hence the entanglement entropy is maximal $S_3=\ln (2j_3+1)$.  So
we proved that $\psi_3$ constructed from Wigner's $3j$ symbols is an
invariant perfect tensor.

For a three-qudit state, we have $j_1=j_2=j_3$, and $d=2j_1+1$. Notice
that for even $d$, $(\psi_3)^{j_1,j_2,j_3}_{m_1,m_2,m_3}$ is always
zero, and for odd $d$, the invariant tensor is unique.

As the simplest example, we take $j_1=j_2=j_3=1$. The $3j$ symbol
simply give the $\epsilon$ symbol (anti-symmetric tensor)
\begin{eqnarray}
(\psi_3)^{1,1,1}_{m_1,m_2,m_3}&=&
\begin{pmatrix}
  1 & 1 & 1\\
  m_1 & m_2 & m_3
\end{pmatrix}\nonumber\\
&=&\frac{1}{\sqrt{6}}\epsilon_{m_1,m_2,m_3}.
\end{eqnarray}
It is readily checked that the reduced density matrix of any single
particle is maximally mixed.

\section{Four-valent IPT: a no-go theorem}
\label{sec:nogo}

In this section, we discuss the existence of $4$-valent IPT. In this
case, the dimension of the invariant subspace equals the qudit
dimension $d$.  On the other hand, the perfect tensors exist for any
$d>2$, possibly with the exception of $d=6$ \cite{PhysRevA.92.032316}.
Since the dimension of invariant subspace grows linearly with $d$, one
might expect that it should be possible to at least find an IPT when $d$
is large.  However, surprisingly, it turns out that there does not
exist IPT for any $d$.

\begin{theorem}
\label{th:main}
There does not exist $4$-valent IPTs, for any $d$.
\end{theorem}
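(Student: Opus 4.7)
The plan is to use the SU(2)-invariant recoupling basis adapted to the $12|34$ cut to reduce the problem to a finite phase system, and then show that perfectness on the other two cuts over-constrains that system.

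First, I would expand $\ket{\psi_4}$ in the recoupling basis $\ket{s_k^{(12|34)}}$, $k=0,\dots,2j$, of $\mathrm{Inv}(V_j^{\otimes 4})$, in which sites $1,2$ and sites $3,4$ are each coupled to intermediate spin $k$ and then paired into a singlet. Writing $\ket{\psi_4}=\sum_k c_k\ket{s_k^{(12|34)}}$ and repeating verbatim the orthogonality argument from Section~\ref{sec:tri} yields
\[
\rho_{12}=\sum_k \frac{|c_k|^2}{2k+1}\,\Pi_k^{(12)},
\]
where $\Pi_k^{(12)}$ is the projector onto the spin-$k$ summand of $V_j\otimes V_j$ on sites $1,2$. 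Perfectness on the $12|34$ cut therefore forces $|c_k|^2=(2k+1)/d^2$, and only the $d$ phases $\theta_k$ in $c_k=(\sqrt{2k+1}/d)\,e^{i\theta_k}$ remain free.

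Second, I would apply the same reasoning in the recoupling bases adapted to the $13|24$ and $14|23$ cuts. Each such change of basis is an orthogonal transformation whose entries are (signed) $6j$ symbols with all six legs equal to $j$; call these transfer matrices $U$ and $V$, where $V$ differs from $U$ only by the sign factors inherited from an odd column-permutation of the $6j$ symbol. Perfectness on the two remaining cuts then takes the form
\[
|(Uc)_{k'}|^2=(2k'+1)/d^2=|(Vc)_{k'}|^2,\qquad k'=0,\dots,2j,
\]
i.e.\ $2d$ real quadratic equations in the $d$ unknown phases.

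The main obstacle is to prove that this overdetermined system admits no solution \emph{uniformly} in the half-integer $j$. My strategy leans on two structural facts: (i) the $6j$ orthogonality relation, which makes the weighted totals $\sum_{k'}(2k'+1)|(Uc)_{k'}|^2$ (and the analogue for $V$) equal to $\sum_k(2k+1)|c_k|^2=1$ automatically, so the real content of the perfectness conditions is the equality at each index $k'$ separately; and (ii) the Biedenharn--Elliott (Racah) identity, which couples $U$ and $V$ through the $S_3$-symmetry permuting the three cuts. I would combine these by taking weighted linear combinations of the squared perfectness conditions so as to collapse them into a single linear system in $e^{i\theta_k}$, and then compare the two systems at the distinguished index $k'=0$, where the $6j$ symbol $\bigl\{j,j,k;j,j,0\bigr\}$ reduces to an elementary closed form proportional to $(-1)^{2j+k}/\sqrt{(2k+1)\,d}$. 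The delicate step is making the obstruction insensitive to the specific value of $d$; I expect this to close by tracing the final inconsistency back to a sign mismatch between the $k'=0$ rows of $U$ and $V$ that cannot be absorbed into any choice of the $\theta_k$.
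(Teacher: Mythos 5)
Your first step is sound and coincides with the paper's Lemma~\ref{lemma2}: expanding in the $12|34$ recoupling basis and imposing maximal mixing on that cut fixes $|c_k|=\sqrt{2k+1}/d$, leaving only the $d$ phases $e^{i\theta_k}$ free. Your reformulation of the remaining two cuts as $|(Uc)_{k'}|^2=|(Vc)_{k'}|^2=(2k'+1)/d^2$, with $6j$-symbol transfer matrices $U,V$ differing by column signs $(-1)^k$, is also correct in principle.

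The gap is in how you propose to extract the contradiction. The conditions $|(Uc)_{k'}|^2=(2k'+1)/d^2$ are genuinely quadratic in the unknown phases (they involve $e^{i(\theta_k-\theta_l)}$), and no weighted linear combination of them ``collapses into a single linear system in $e^{i\theta_k}$''; the Biedenharn--Elliott identity likewise plays no role. What actually closes the argument --- and what you are circling at the index $k'=0$ without stating --- is a saturation-of-the-triangle-inequality step. The $k'=0$ row of $U$ has entries of modulus $\sqrt{2k+1}/d$ and signs $(-1)^{2j+k}$, so the condition $|(Uc)_0|=1/d$ reads $\bigl|\sum_k(-1)^k\tfrac{2k+1}{d^2}e^{i\theta_k}\bigr|=1$ with positive weights $\tfrac{2k+1}{d^2}$ summing to $1$; equality in the triangle inequality forces all the phases $(-1)^k e^{i\theta_k}$ to coincide. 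The corresponding condition for $V$ forces all $e^{i\theta_k}$ to coincide, and for $d\geq 2$ these are incompatible --- that is the entire proof. This is precisely the paper's Lemma~\ref{lemma3}, which the paper applies not to the $k'=0$ recoupling rows but to the single diagonal matrix elements $\rho_{24}^{jj,jj}$ and $\rho_{23}^{jj,jj}$ in the magnetic basis, where the positive weights are $(C^{\,j\,\,j}_{-j\,j\,J\,0})^2$ rather than $(2J+1)/d^2$; the two versions are equivalent. Until you state and prove this saturation step, your ``sign mismatch that cannot be absorbed into any choice of the $\theta_k$'' is an expectation rather than an argument: a sign difference between two rows subject only to modulus constraints is in general perfectly absorbable by phases, unless the modulus constraint is extremal --- which is exactly the point that must be proved.
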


To prove this theorem, we start from writing down a general form of
invariant tensors.  For the $4$-qudit invariant tensor, by choosing a
coupling scheme, we can formulate the state in terms of the
Clebsch-Gordan coefficients as follows:
\begin{widetext}
\begin{equation}
|\psi_4\rangle=\sum_{J=0}^{2j}\alpha(J)\sum_{\text{all }m_is,M}C^{\, j\,\, j}_{m_1\, m_2\, J\, M}C^{\, j\,\, j}_{m_3\, m_4\, J\, -M}
C^{\, J\,\, J}_{M\, -M\, 0\, 0}\ket{m_1,m_2,m_3,m_4},\label{4-pure}
\end{equation}
\end{widetext}
where all $m_is$ run from $-j$ to $j$ and $M$ runs from $-J$ to $J$.

The perfect state condition requires that
\begin{equation}
\rho_{34}=\rho_{24}=\rho_{23}=\frac{1}{d^2}\mathbb{I}_{d^2},
\label{perf}
\end{equation}
where $\rho_{ij}=\tr_{\overline{ij}}\ket{\psi}\bra{\psi}$ is the reduced
density matrix of the $i,j$ particles, and $\mathbb{I}_{d^2}$ is the
identity matrix of size $d^2\times d^2$, where $d=2j+1$.

We will show that Eq.~\eqref{perf} cannot be satisfied for any $d$.
We first examine the consequence of
$\rho_{34}=\frac{1}{d^2}\mathbb{I}_{d^2}$, which is given by the
following lemma.

\begin{lemma}\label{lemma2}
If $\rho_{34}=\frac{1}{d^2}\mathbb{I}_{d^2}$,
then \[|\alpha(J)|=\frac{\sqrt{2J+1}}{2j+1}.\]
\end{lemma}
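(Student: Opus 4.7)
The plan is to pass to the coupled basis on the pairs $(1,2)$ and $(3,4)$, in which the partial trace over particles $1,2$ becomes transparent. The two Clebsch-Gordan factors $C^{\,j\,\,j}_{m_1\,m_2\,J\,M}C^{\,j\,\,j}_{m_3\,m_4\,J\,-M}$ in \eqref{4-pure} are precisely the change of basis from the computational basis $\ket{m_1,m_2,m_3,m_4}$ to the coupled basis $\ket{J,M}_{12}\otimes\ket{J,-M}_{34}$, and the last factor $C^{JJ}_{M,-M,0,0}=(-1)^{J-M}/\sqrt{2J+1}$ is nothing but the coefficient of the singlet built out of two spin-$J$ systems. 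Performing this rewrite, one finds
\begin{equation*}
\ket{\psi_4}=\sum_{J=0}^{2j}\frac{\alpha(J)}{\sqrt{2J+1}}\sum_{M=-J}^{J}(-1)^{J-M}\ket{J,M}_{12}\ket{J,-M}_{34},
\end{equation*}
which is Schmidt-like across the $(1,2)|(3,4)$ cut, the signs $(-1)^{J-M}$ being inessential once one takes the marginal.

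Next I would compute $\rho_{34}=\tr_{12}\op{\psi_4}{\psi_4}$. By orthonormality of $\{\ket{J,M}_{12}\}$ in $V_j\otimes V_j$, every cross term with $(J,M)\neq(J',M')$ is killed by the trace, and the sign factors square to unity, leaving
\begin{equation*}
\rho_{34}=\sum_{J=0}^{2j}\frac{|\alpha(J)|^2}{2J+1}\,P_J,
\qquad P_J:=\sum_{M=-J}^{J}\op{J,-M}{J,-M},
\end{equation*}
where $P_J$ is the projector onto the spin-$J$ isotypic subspace of $V_j\otimes V_j$ sitting on sites $(3,4)$.

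The final step invokes the Clebsch-Gordan decomposition $V_j\otimes V_j=\bigoplus_{J=0}^{2j}V_J$, which yields the resolution of identity $\mathbb{I}_{d^2}=\sum_J P_J$. Since the $P_J$ are mutually orthogonal and nonzero, imposing $\rho_{34}=\mathbb{I}_{d^2}/d^2$ forces an irrep-by-irrep match: $|\alpha(J)|^2/(2J+1)=1/d^2=1/(2j+1)^2$, which is exactly the claimed $|\alpha(J)|=\sqrt{2J+1}/(2j+1)$.

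The calculation is essentially bookkeeping; the only real idea is to recognize the last Clebsch-Gordan factor as the singlet coefficient and to write $\ket{\psi_4}$ in Schmidt-like form along the $(1,2)|(3,4)$ cut. After that there is no obstacle, because the maximal-mixedness condition on $\rho_{34}$ simply requires each irrep $V_J$ to appear with weight proportional to its dimension $2J+1$.
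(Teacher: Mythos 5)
Your proposal is correct and follows essentially the same route as the paper: both reduce $\rho_{34}$ to the operator $\sum_{J,M}\frac{|\alpha(J)|^2}{2J+1}\op{J,-M}{J,-M}$ on sites $3,4$ and then invoke the completeness relation $\mathbb{I}_{d^2}=\sum_{J,M}\op{J,-M}{J,-M}$ to match coefficients irrep by irrep. Your presentation via the explicit Schmidt-like form of $\ket{\psi_4}$ across the $(1,2)|(3,4)$ cut is a slightly cleaner packaging of the same computation the paper does at the level of matrix elements.
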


\begin{proof}
According to Eq.~\eqref{4-pure}, the matrix element of $\rho_{34}$
labeled by ${m_3m_4,m_3'm_4'}$ is given by
\begin{eqnarray}
\sum_{J,M}\frac{|\alpha(J)|^2}{2J+1}
C^{\, j\,\, j}_{m_3\, m_4\, J\, -M}C^{\, j\,\, j}_{m_3'\, m_4'\, J\, -M}.\label{tr12}
\end{eqnarray}
Substituting the definition of CGCs in Eq.~\eqref{3j-def} into
Eq.~\eqref{tr12}, we get
\begin{alignat*}{5}
&\sum_{J,M}\frac{|\alpha(J)|^2}{2J+1}
C^{\, j\,\, j}_{m_3\, m_4\, J\, -M}C^{\, j\,\, j}_{m_3'\, m_4'\, J\, -M}\nonumber\\
&\quad=\bra{jm_3;jm_4}\hat{O}\ket{jm_3';jm_4'}\nonumber\\
&\quad=\frac{1}{d^2}\delta_{m_3m'_3}\delta_{m_4m'_4},
\end{alignat*}
where
$\hat{O}=\sum_{J,M}\frac{|\alpha(J)|^2}{2J+1}\ket{J,-M}\bra{J,-M}$. The
second equality is true for any element, which means that
\begin{equation}
\sum_{J,M}\frac{|\alpha(J)|^2}{2J+1}\ket{J,-M}\bra{J,-M}=\frac{1}{d^2}\mathbb{I}_{d^2}.
\end{equation}

The completeness of the basis $\{\ket{J,M}\}$ implies that the
identity operator has the unique decomposition as
\begin{alignat*}{5}
\mathbb{I}_{d^2}=\sum_{J,M}\ket{J,-M}\bra{J,-M},
\end{alignat*}
so we conclude that
\begin{alignat*}{5}[
|\alpha(J)|=\frac{\sqrt{2J+1}}{2j+1}.
\end{alignat*}
\end{proof}

By Lemma~\ref{lemma2}, we can rewrite
\begin{alignat*}{5}
\alpha(J)=\frac{\sqrt{2J+1}}{2j+1}\omega(J),
\end{alignat*}
where $\omega(J)$ is a phase factor.

Now we further examine the consequence of
$\rho_{24}=\rho_{23}=\frac{1}{d^2}\mathbb{I}_{d^2}$, and show that no choice of
$\omega(J)$ can satisfy both conditions.  The key idea is the
following: $\rho_{24}$ is obtained by tracing out the particles $1,3$
from $\ket{\psi_4}$; on the other hand, $\rho_{23}$ can be obtained by
first swapping particles $3$ and $4$ in $\ket{\psi_4}$, then tracing
out the particles $1,3$.  Due to the form of $\ket{\psi_4}$ that
involves Clebsch-Gordan coefficients, the permutation will result in various
$(-1)^J$ factors.  Consequently, we will end up with two equations for
$\omega(J)$ that contradict each other, which then proves that no
choice of $\omega(J)$ can lead to
$\rho_{24}=\rho_{23}=\frac{1}{d^2}\mathbb{I}_{d^2}$.

To be more concrete, $\rho_{24}=\frac{1}{d^2}\mathbb{I}_{d^2}$
leads to the equation
\begin{equation}\label{eq:omega1}
\sum_{J}(-1)^J\omega(J)C^{\, j\,\, j}_{-j\, j\, J\, 0}C^{\, j\,\, j}_{-j\, j\, J\, 0}=e^{i\theta},
\end{equation}
and $\rho_{24}=\frac{1}{d^2}\mathbb{I}_{d^2}$
leads to an equation
\begin{equation}
\label{eq:omega2}
\sum_{J}\omega(J)C^{\, j\,\, j}_{-j\, j\, J\, 0}C^{\, j\,\, j}_{-j\,
  j\, J\, 0}=e^{i\theta'}.
\end{equation}
(See Appendix~\ref{app:eq} for details concerning the derivation of
Eqs.~\eqref{eq:omega1} and~\eqref{eq:omega2}.)  Now we show that these two
equations cannot be satisfied simultaneously, which is given by the
following lemma.

\begin{lemma}\label{lemma3}
$\omega(J)=1$ is the only solution of $\omega(J)$ to the equation
\begin{equation}
\sum_{J}\omega(J)C^{\, j\,\, j}_{-j\, j\, J\, 0}C^{\, j\,\, j}_{-j\, j\, J\, 0}=1,\label{eq:lemma2}
\end{equation}
when $\omega(J)$ is a phase factor.
\end{lemma}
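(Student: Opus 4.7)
The strategy is to recognize the identity as the statement that $1$ is a convex combination of unit-modulus complex numbers, and then invoke extremality of $1$ on the unit disk.

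First I would set $c_J := (C^{\, j\,\, j}_{-j\, j\, J\, 0})^2$, which are real and non-negative. Since the product state $\ket{j,-j}\ket{j,j}$ has unit norm and expands in the orthonormal coupled basis $\{\ket{J,0}\}_{J=0}^{2j}$ as
\[
\ket{j,-j}\ket{j,j}=\sum_{J=0}^{2j} C^{\, j\,\, j}_{-j\, j\, J\, 0}\ket{J,0},
\]
the weights satisfy $\sum_{J} c_J = 1$. With this, Eq.~\eqref{eq:lemma2} takes the clean form $\sum_J c_J\,\omega(J)=1$ with $c_J\ge 0$, $\sum_J c_J = 1$, and $|\omega(J)|=1$ for each $J$.

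The central step I need is strict positivity: $c_J>0$ for every $J\in\{0,1,\ldots,2j\}$. This is a statement about the stretched Clebsch--Gordan coefficient $\langle j,-j;\,j,j\,|\,J,0\rangle$. Two natural routes are available. The first is to cite the standard closed-form Racah expression for these ``edge'' coefficients, which is an explicit positive quantity for every $J$ in the allowed triangle. The second is representation-theoretic: the weight-zero vector $\ket{j,-j}\ket{j,j}$ can be reached by repeated application of $J_-$ to the highest-weight state $\ket{j,j}\ket{j,j}=\ket{2j,2j}$ and of $J_+$ to $\ket{j,-j}\ket{j,-j}=\ket{2j,-2j}$, and one shows inductively (or via a generating-function argument) that its projection onto each irreducible summand $V_J\subset V_j\otimes V_j$ is nonzero. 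A quick sanity check at $j=\tfrac12$ and $j=1$, where one can read off $c_J$ directly from the standard CGC tables, confirms the formula.

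Once strict positivity of every $c_J$ is in hand, the conclusion is immediate. By the triangle inequality,
\[
1 = \Big|\sum_J c_J\,\omega(J)\Big| \le \sum_J c_J|\omega(J)| = \sum_J c_J = 1,
\]
so equality holds. Since $1$ is an extreme point of the closed unit disk and every weight $c_J$ is strictly positive, equality forces $\omega(J)=1$ for all $J$, which is the claim. The only genuinely nontrivial step is verifying that no $c_J$ vanishes; everything else is bookkeeping and elementary convexity.
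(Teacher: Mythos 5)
Your proof is correct and follows essentially the same route as the paper: both reduce Eq.~\eqref{eq:lemma2} to the statement that a convex combination of phases with strictly positive weights equals $1$ (the paper phrases the extremality step by taking real parts, $\sum_J[1-\operatorname{Re}\omega(J)]\,c_J=0$ with each term nonnegative, rather than via the triangle inequality, but this is the same argument). Both proofs also hinge on the same key fact that $(C^{\, j\,\, j}_{-j\, j\, J\, 0})^2>0$ for every $J=0,\dots,2j$, which the paper simply asserts and which you rightly flag as the one nontrivial input (it follows, e.g., from the closed form of the stretched coefficient $\langle j,j;j,-j|J,0\rangle$).
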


\begin{proof}
Firstly, it can be easily checked that, $\omega(J)=1$ is a
solution. Suppose we have another phase factor $\omega_1(J)$, which
satisfies Eq.~\eqref{eq:lemma2}, so
\[
\sum_{J}[1-\operatorname{Re}(\omega_1(J))]C^{\, j\,\, j}_{-j\, j\, J\, 0}C^{\, j\,\, j}_{-j\, j\, J\, 0} =0,
\]
however, $C^{\, j\,\, j}_{-j\, j\, J\, 0}C^{\, j\,\, j}_{-j\, j\, J\,
  0} > 0$, and $\operatorname{Re}(\omega_1(J))\leq 1$, for all $J ={0,1\ldots,2j}$.
This directly leads to the fact that $\omega_1(J)=1$.
\end{proof}

Using Lemma~\ref{lemma3}, one sees the intrinsic contradiction between
Eq.~\eqref{eq:omega1} and Eq.~\eqref{eq:omega2}, so the conditions
$\rho_{34}=\rho_{24}=\rho_{23}=\frac{1}{d^2}\mathbb{I}_{d^2}$ cannot
be satisfied simultaneously for any $d$. One may easily verify that
$\alpha(J)=(-1)^J\frac{\sqrt{2J+1}}{2j+1}$ does satisfy
$\rho_{34}=\rho_{24}=\frac{1}{d^2}\mathbb{I}_{d^2}$, and is also the
unique solution after neglecting an unimportant global phase. In other
words, if $\rho_{34}=\rho_{24}=\frac{1}{d^2}\mathbb{I}_{d^2}$ for any
$4$-valent invariant tensor $\ket{\psi_4}$, then we cannot have
$\rho_{23}=\frac{1}{d^2}\mathbb{I}_{d^2}$ at the same time. This hence
proves Theorem~\ref{th:main}.

\section{Random Invariant Tensor and Asymptotical Perfectness}
\label{sec:ram}

In the last section, we have shown that there does not exist any
$4$-valent IPT. Then a natural question is whether there exists an
invariant tensor that is `nearly perfect'.  To examine this question,
we would like to consider the limit $j\rightarrow\infty$ ($d=2j+1$). We
know that, in this case, a random tensor exhibits the phenomenon of
`concentration of measure', where for any bipartition, the
entanglement entropy of the reduced state is near the maximally
possible, asymptotically as $j\rightarrow\infty$. Now the question
becomes whether this `concentration of measure' phenomenon will also
show up in the space of invariant tensors. We give an affirmative
answer in this section for the case of $4$-valent invariant tensor.

Given an invariant tensor $|I\rangle \in
\mathrm{Inv}_{SU(2)}(V_{j_1}\otimes\cdots\otimes V_{j_4})$, we define
the density matrix $\rho=|I\rangle\langle I|$. We consider an
arbitrary bipartition into two pairs.  Without loss of
generality, we consider the reduced density matrix
$\rho_{34}=\trace_{12}\rho$ by tracing out the degrees of freedom in
$V_{j_1}\otimes V_{j_2}$.  The second Renyi entropy $S_2$ of
$\rho_{34}$ is given by
\begin{equation}
  e^{-S_2}=\frac{\trace
    \rho_{34}^2}{(\trace \rho_{34})^2}.\label{S2}
\end{equation}
It is not hard to check that the numerator
\begin{equation}
  Z_1\equiv \trace
  \rho_{34}^2=\trace\lt[\lt(\rho\otimes\rho\rt)\cf_{34}\rt],
\end{equation}
where the last trace is over the space $(V_{j_1}\otimes\cdots\otimes
V_{j_4})^{\otimes 2}$. $\cf_{34}$ is a swap operator that swaps
particles $3$ and $4$.  The denominator of Eq.~\eqref{S2} can be
written similarly as
\begin{equation}
  Z_0\equiv(\trace
  \rho_{34})^2=\trace\lt[\rho\otimes\rho\rt].
\end{equation}
We randomly sample the invariant tensors $|I\rangle$ in the invariant
subspace $\ch_{\text{inv}}=\mathrm{Inv}_{SU(2)}(V_{j_1}\otimes\cdots\otimes
V_{j_4})$, and consider the average
\begin{equation}
\overline{Z_1}=\trace\lt[\overline{\lt(\rho\otimes\rho\rt)}\cf_{34}\rt].
\end{equation}
Direct calculation by using Schur's Lemma of Haar random average
\cite{church} shows that (see Appendix~\ref{app:Z1Z0})
\begin{equation}
\overline{Z_1}=\frac{2\sum_{I}(2I+1)^{-1}}{\dim(\ch_{\text{inv}})^2+\dim(\ch_{\text{inv}})}
\end{equation}
and $\bar{Z}_0=1$.  Therefore, the averaged second Renyi entropy is given by
\begin{alignat}{5}
\overline{S_2}&{}=-\ln\frac{\overline{Z_1}}{\overline{Z_0}}\nonumber\\
&{}=\ln \lt[\dim\lt(\ch_{\text{inv}}\rt)^2+\dim\lt(\ch_{\text{inv}}\rt)\rt]\nonumber\\
&\quad{}-\ln\lt(2\sum_{I}(2I+1)^{-1}\rt).\label{ZZ}
\end{alignat}

When all spins $j_1=j_2=j_3=j_4=j$ are equal, we have
$\dim\lt(\ch_{\text{inv}}\rt)=2j+1$ and
\begin{alignat}{5}
\overline{S_2}&=\ln \lt[(2j+1)^2+(2j+1)\rt]\nonumber\\
&\quad{}-\ln\lt(2\sum_{I=0}^{2j}(2I+1)^{-1}\rt).
\end{alignat}
Let $j\to\infty$ asymptotically, the leading behavior of $\overline{S_2}$ is
\begin{equation}
\overline{S_2}\sim \ln \lt[(2j+1)^2\rt].
\end{equation}
Although $\ln\lt(2\sum_{I=0}^{2j}(2I+1)^{-1}\rt)$ is also divergent as $j\to\infty$, the divergence is much slower than $\ln \lt[(2j+1)^2\rt]$, indeed,
\begin{equation}
\lim_{j\to\infty} \frac{\ln\lt(2\sum_{I=0}^{2j}(2I+1)^{-1}\rt)}{\ln \lt[(2j+1)^2\rt]}=0.
\end{equation}

We also estimate the fluctuation of $\bar{S_2}$. In fact, for any
small $\delta>0$, there is a large probability
\begin{equation}
  P_\delta=1-\frac{3\pi^2}{\delta^2\L^2},
\end{equation}
which is close to $1$ as $j\to\infty$ (since
$\L=\sum_{I=0}^{2j}\frac{1}{2I+1}\to\infty$), such that
$\lt|S_2-\overline{S_2}\rt|\leq\delta$, i.e., the second Renyi entropy
is close to the average value
$\overline{S_2}\sim\ln\lt[(2j+1)^2\rt]$. The derivation of the above
result is presented in Appendix~\ref{sec:bound} \footnote{The idea of
  the proof is similar to \cite{Qi1}}).

Because the von Neumann entropy is lower bounded by the second Renyi,
i.e., $S\geq S_2$, and $\ln\lt[(2j+1)^2\rt]$ is the maximal value of
the entanglement entropy of the 4-valent tensor state, we have for the
Von Neumann entropy
\begin{equation}
  S\sim\ln\lt[(2j+1)^2\rt].
\end{equation}
The state is maximally entangled for any partition into two pairs,
asymptotically as $j\to\infty$.  Therefore, the random invariant tensor
is asymptotically a perfect tensor.

\section{Discussion}
\label{sec:dis}
We have introduced the concept of Invariant Perfect Tensors (IPT) and
discussed their existence and construction.  For $3$-valent tensor, IPT
exist for integer spin $j$ and is given by the unique spin zero state
whose coefficient is Wigner's $3j$ symbol. We  showed that there does not
exist $4$-valent IPT for any single particle spin $j$. On the other
hand, a random $4$-valent invariant tensor is asymptotically perfect.

It is natural to ask about the case of $n>4$.  It turns out that the
situation is more complicated, and the method used to prove
Theorem~\ref{th:main} does not directly apply for $n>4$. However, one
may expect the permutation of particles in the invariant subspace may
still cause certain contradictions such that some of the reduced
density matrices cannot simultaneously be identity.  Numerical results
for small local dimensions and $n=5$ and $n=6$ indicate such
contradictions.  One may guess that there might be some fundamental
structural reason that IPT might not exist for $n>4$, although IPT
might appear asymptotically when $j$ is large. We leave this for
future research.

\section*{Acknowledgements}

We thank Jianxin Chen, Shawn Cui, Cheng Guo, Guilu Long, Dong Ruan,
Nengkun Yu, Ling-Yan Hung and Yidun Wan for helpful discussions. MH acknowledges Yidun Wan and Ling-Yan Hung at Fudan University, Wei Song at Yau's Institute of Mathematics (Tsinghua University), for their hospitality during his visit. MH acknowledges support from the
US National Science Foundation through grant PHY-1602867, and the
Start-up Grant at Florida Atlantic University, USA. BZ is supported by
NSERC and CIFAR.

\appendix

\section{Geometrical interpretation of invariant tensor}\label{GeoInt}

The origin of geometrical interpretation traces back to a classic
theorem by Minkowski, which states the following: Given a set of
vectors $\vec{A}_1,\cdots,\vec{A}_n\in\mathbb{R}^3$ satisfying a
closure condition $\sum_{i=1}^n\vec{A}_i=0$, then there is a unique
polyhedron in $\mathbb{R}^3$ with $n$ faces, whose face areas is given
by $|\vec{A}_i|$ and the normal of each face is given by
$\vec{A}_i/|\vec{A}_i|$.  Therefore, a classical polyhedron geometry can
be parameterized by the oriented face area vectors
$\vec{A}_1,\cdots,\vec{A}_n$ subject to the closure condition.

Loop Quantum Gravity (LQG) provides the result that the polyhedron
geometry can be quantized.  The quantum polyhedron geometry is obtained
by promoting the vectors $\vec{A}_1,\cdots,\vec{A}_n$ to vector-valued
operators $\hat{\vec{A}}_1,\cdots,\hat{\vec{A}}_n$. LQG derives the
commutation relation between the operators
\cite{Ashtekar:1998ak,Thiemann:2000bv}:
\begin{equation}
\left[\hat{A}_i^a,\hat{A}_j^b\right]=8\pi i\gamma \ell_P^2 \delta_{ij}\epsilon^{abc}\hat{A}_i^c,\label{Acomm}
\end{equation}
where $a,b,c=1,2,3$ are the indices of vector components,
$i,j=1,\cdots,n$ label the faces of polyhedron; $\ell_P=G_N\hbar$ is
the Planck length; $\gamma$ is called the Barbero-Immirzi parameter in
LQG.  It is easy to see that different faces correspond to different
degrees of freedom, which are commutative.  For a given face $i$, the
vector components of $\hat{\vec{A}}_i$ are non-commutative. The
commutation relation is the same as the commutation relation of
angular momentum operator $[J^a,J^b]=i\epsilon^{abc}J^c$, or
equivalently the commutation relation of the Lie algebra
$\mathfrak{su}(2)$.

The Hilbert space may be chosen as a tensor product of SU(2) irreps
$\otimes_{i=1}^n V_{j_i}$, to represent the above operator algebra
Eq.~\eqref{Acomm}. Each $\hat{A}^a_i$ ($a=1,2,3$) is represented as
the $\mathfrak{su}(2)$ generator $J^a$ acting on the $i$-th copy of
irrep $V_{j_i}$
\[
\hat{A}^a_i=8\pi \gamma \ell_P^2 J^a.
\]
However, recall that classically there is the closure condition
constraining the data $\vec{A}_1,\cdots,\vec{A}_n\in\mathbb{R}^3$. The
closure condition has to be promoted to an operator constraint
\begin{equation}
\sum_{i=1}^n\hat{\vec{A}}_i\psi=8\pi \gamma \ell_P^2\sum_{i=1}^n\mathbf{J}_i \psi=0,
\end{equation}
which is precisely Eq.~\eqref{invtensor}. Solving the quantum
constraint equation reduces $\otimes_{i=1}^n V_{j_i}$ to the invariant
tensor subspaces $\mathrm{Inv}(\otimes_{i=1}^n V_{j_i})$.

The invariant tensors parameterizes the quantum geometry of a
polyhedron with $n$ faces. A number of geometrical operators can be
defined on the Hilbert space $\mathrm{Inv}(\otimes_{i=1}^n
V_{j_i})$. For instance, from the classical face area $|\vec{A}_i|$,
we have the area operator
\begin{alignat}{5}
\hat{\mathrm{Ar}}_i\psi&{}=\sqrt{\vec{A}_i\cdot \vec{A}_i}\,\psi=8\pi \gamma \ell_P^2\sqrt{\mathbf{J}^2}\,\psi\nonumber\\
&{}=8\pi \gamma \ell_P^2\sqrt{j_i(j_i+1)}\,\psi.
\end{alignat}
Here we see that the spin $j_i$ is the quantum number of the $i$-th
face area. The discreteness of $j$ implies that the area spectrum is
discrete at the quantum level (at Planck scale). The quantum volume
operator can also be defined by quantizing the classical expression of
volume,e.g., for a tetrahedron $n=4$
\begin{equation*}
\hat{\text{Vol}}_{\text{tetrahedron}}=\frac{\sqrt{2}}{3}\sqrt{|\hat{\vec{A}}_1\cdot(\hat{\vec{A}}_2\times\hat{\vec{A}}_3)|}.
\end{equation*}
The volume operator always commutes with the area operator. The
eigenvalue problem of the volume operator can be solved in the Hilbert
space $\otimes_{i=1}^n V_{j_i}$ of invariant tensors. The operator
spectrum (eigenvalues) is again discrete (the volume spectrum is
discrete at Planck scale). The eigenstates corresponding to different
volume eigenvalues form a complete orthonormal basis of
$\otimes_{i=1}^n V_{j_i}$. Here we see that at the quantum level, the
invariant tensors in $\otimes_{i=1}^n V_{j_i}$ actually parameterize
the different quantum shapes of the polyhedron with the same face
areas $j_i$. The different (quantum) shapes of polyhedron correspond
to the different (quantum) volumes.

As an example, the trivalent invariant tensor $\ket{\psi_3}$ given in
Eq.~\eqref{eq:tri} as a perfect tensor also has an geometrical
interpretation. Namely, given a pair of triangles with the same edge
lengths $j_1,j_2,j_3$, we pick two pairs of edges from two triangles
of the same length and glue each pair. The gluing corresponds to
taking the inner product $\sum_{m_1,m_2}$ in the orthogonality
relation. Because the triangle with fixed edge lengths is rigid,
gluing two pairs of edges makes the last pair of edges congruent. This
congruence corresponds to $\delta_{m_3,m_3'}$ in
Eq.~\eqref{cong}.  From this example, we see that given the geometrical
interpretation of an invariant tensor as polygon or polyhedron, the
perfectness of the invariant tensor relates to certain rigidity of
the polygon or polyhedron.

\section{Perfect tensors as quantum error-correcting codes}

An $n$-qudit state/tensor $\ket{\psi_n}$ is perfect if
for any bipartition, whose number of particles $k$ in the smaller part
satisfies $1\leq k\leq \lfloor n/2\rfloor$, the entropy of the reduced
density matrix is maximal.  An $n$-qubit perfect tensor can be equivalently viewed
as an $[\![n,0,\delta]\!]_d$ quantum error-correcting code with the code
distance $\delta=\lfloor n/2\rfloor+1$.

For the case of $n$-qubits (i.e., $d=2$), it is known that there exist
$[\![2,0,2]\!]_2$, $[\![3,0,2]\!]_2$, $[\![5,0,3]\!]_2$ and $[\![6,0,4]\!]_2$ quantum
codes. However the $[\![n,0,\lfloor n/2\rfloor+1]\!]_2$ code does not
exist for $n=4$ and $n>6$.  Recently, it has been shown that a code
$[\![7,0,4]\!]_2$ does not exist either \cite{HGS16}.

For $d>2$ (possibly with the exception of $d=6$), $[\![4,0,3]\!]_d$ exist,
i.e., there exist perfect $4$-valent tensors. When $d>2$ is a prime
power, we can use a CSS-type code derived from a classical MDS code
with generator matrix
\begin{equation}
  G=\begin{pmatrix}
  1&0&1&1\\
  0&1&1&\alpha,
  \end{pmatrix},
\end{equation}
where $\alpha$ is an arbitrary element of the field different from $0$
and $1$. When $d=d_1d_2$ is a composite odd number or divisible by
four, we can take the tensor product of codes $[\![4,0,3]\!]_{d_1}$
and $[\![4,0,3]\!]_{d_2}$, considered as code of length four over
dimension $d$.  Finally, when $d$ is divisible by two, but not by
four, one can use the construction given in \cite{PhysRevA.92.032316}
using a pair of mutually orthogonal latin squares (MOLS) of order $d$.

\section{The derivation of Eqs. (\ref{eq:omega1}) and (\ref{eq:omega2})}
\label{app:eq}

By Lemma $2$, we can simply set
$\alpha(J)=\frac{\sqrt{2J+1}}{2j+1}\omega(J)$, where $\omega(J)$ is a
phase factor. Substituting
$\alpha(J)=\frac{\sqrt{2J+1}}{2j+1}\omega(J)$, simplify
$\tr_{13}\ket{\psi_4} \bra{\psi_4}$ and $\tr_{14}\ket{\psi_4}
\bra{\psi_4}$, then we have
\begin{alignat*}{5}
&\rho_{24}^{m_2 m_4,m_2' m_4'}\nonumber\\
&\quad{}= \frac{1}{d^2}\sum_{m_1,m_3}F_{m_1m_2m_3m_4}F^{\ast}_{m_1m_2'm_3m_4'},\nonumber\\[1ex]
&\rho_{23}^{m_2 m_3,m_2' m_3'}\nonumber\\
&\quad= \frac{1}{d^2}\sum_{m_1,m_4}F_{m_1m_2m_3m_4}F^{\ast}_{m_1m_2'm_3'm_4},\nonumber
\end{alignat*}
where
\begin{alignat*}{5}
F_{m_1m_2m_3m_4}&{}=\sum_{J,M}\omega(J)(-1)^{J-M}\\
&\quad\times C^{\, j\,\, j}_{m_1 m_2\, J M}C^{\, j\,\, j}_{m_3\, m_4 J-M}.
\end{alignat*}

Consider the special case $\rho_{24}^{jj,jj}$ and $\rho_{23}^{jj,jj}$,
Eq.~\eqref{perf} leads to
\begin{alignat}{5}
\rho_{24}^{jj,jj} &{}= \frac{1}{d^2}|F_{-j,j,-j,j}|^2=\frac{1}{d^2},\label{131}\\
\rho_{23}^{jj,jj} &{}= \frac{1}{d^2}|F_{-j,j,j,-j}|^2=\frac{1}{d^2},\label{141}
\end{alignat}
where
\begin{alignat*}{5}
F_{-j,j,-j,j}&{}=\sum_{J}\omega(J)(-1)^{J}C^{\, j\,\, j}_{-j\, j\, J\, 0}C^{\, j\,\, j}_{-j\, j\, J\,0},\\
F_{-j,j,j,-j}&{}=\sum_{J}\omega(J)C^{\, j\,\, j}_{-j\, j\, J\, 0}C^{\, j\,\, j}_{-j\, j\, J\,0}.
\end{alignat*}
By Lemma~\ref{lemma3}, we know that Eq.~\eqref{131} and Eq.~\eqref{141} contradict each other.

\section{The calculation of $\bar{Z}_1$ and $\bar{Z}_0$}
\label{app:Z1Z0}

In this section, we calculate the average of $Z_1$ and $Z_0$.

By Schur's Lemma \cite{church,Qi1}
\begin{alignat}{5}
&\overline{\rho\otimes\rho}\nonumber\\
&=\int\rmd U\, (U\otimes U)|0\rangle\langle 0|\otimes |0\rangle\langle 0|(U^\dagger\otimes U^\dagger)\nonumber\\
&=\frac{1}{\dim(\ch_{\text{inv}})^2+\dim(\ch_{\text{inv}})}\lt(\ci+\cf\rt),\notag\\
\end{alignat}
where $|0\rangle$ is an arbitrary reference state in
$\ch_{\text{inv}}$. The average is over all unitary operators $U$ on
$\ch_{\text{inv}}$; $\ci$ is the identity operator on
$\ch_{\text{inv}}\otimes\ch_{\text{inv}}$, and $\cf$ is the swap
operator
\begin{alignat}{5}
\ci |I\rangle\otimes |I'\rangle&{}=|I\rangle\otimes |I'\rangle,\nonumber\\
\cf |I\rangle\otimes |I'\rangle&{}=|I'\rangle\otimes |I\rangle.
\end{alignat}
The average $\overline{Z_1}$ is computed as follows
\begin{widetext}
\begin{alignat}{5}
&\lt[\dim(\ch_{\text{inv}})^2+\dim(\ch_{\text{inv}})\rt]\overline{Z_1}\nonumber\\
&\ {}=\sum_{\vec{m},\vec{m}'}\langle m_1;m_2;m_3;m_4|\otimes\langle m'_1;m'_2;m'_3;m'_4|\lt(\ci+\cf\rt)\cf_{34} | m_1;m_2;m_3;m_4\rangle\otimes | m'_1;m'_2;m'_3;m'_4\rangle\nonumber\\
&\ {}=\sum_{\vec{m},\vec{m}'}\langle m_1;m_2;m_3;m_4|\otimes\langle m'_1;m'_2;m'_3;m'_4|\lt(\ci+\cf\rt) | m_1;m_2;m'_3;m'_4\rangle\otimes | m'_1;m'_2;m_3;m_4\rangle.\nonumber
\end{alignat}
\end{widetext}
$\ci$ and $\cf$ act on the invariant tensors in
$\ch_{\text{inv}}\otimes\ch_{\text{inv}}$. So when they acting on
$|m_1;m_2;m'_3;m'_4\rangle\otimes | m'_1;m'_2;m_3;m_4\rangle$, they
give
\begin{widetext}
\begin{alignat}{5}
&\lt(\ci+\cf\rt) | m_1;m_2;m'_3;m'_4\rangle\otimes | m'_1;m'_2;m_3;m_4\rangle\nonumber\\
&\quad{}=\lt(\ci+\cf\rt) P_{inv}\otimes P_{inv}| m_1;m_2;m'_3;m'_4\rangle
\otimes | m'_1;m'_2;m_3;m_4\rangle\nonumber\\
&\quad{}=\sum_{I,I'}|I\rangle I^{j_1,j_2,j_3,j_4}_{m_1,m_2,m_3',m_4'}\otimes|I'\rangle I'^{j_1,j_2,j_3,j_4}_{m_1',m_2',m_3,m_4}
+\sum_{I,I'}|I\rangle I^{j_1,j_2,j_3,j_4}_{m_1',m_2',m_3,m_4}\otimes|I'\rangle I'^{j_1,j_2,j_3,j_4}_{m_1,m_2,m_3',m_4'},\nonumber
\end{alignat}
\end{widetext}
where we have used $I$ to label an orthonormal basis in
$\ch_{\text{inv}}$. $P_{\text{inv}}=\sum_{I}|I\rangle\langle I|$ is
the projector onto the invariant subspace
$\ch_{\text{inv}}$. $I^{j_1,j_2,j_3,j_4}_{m_1,m_2,m_3,m_4}=\langle
I|m_1,m_2,m_3,m_4\rangle$ is the invariant tensor component.
$\overline{Z_1}$ is thus expressed as
\begin{widetext}
\begin{alignat}{5}
&\lt[\dim(\ch_{\text{inv}})^2+\dim(\ch_{\text{inv}})\rt]\overline{Z_1}\nonumber\\
&\quad{}=\sum_{\vec{m},\vec{m}'}\Bigg(\sum_{I,I'}(I^{j_1,j_2,j_3,j_4}_{m_1,m_2,m_3,m_4})^* I^{j_1,j_2,j_3,j_4}_{m_1,m_2,m_3',m_4'}(I'^{j_1,j_2,j_3,j_4}_{m_1',m_2',m_3',m_4'})^* I'^{j_1,j_2,j_3,j_4}_{m_1',m_2',m_3,m_4}\nonumber\\
&\qquad{}+\sum_{I,I'} (I^{j_1,j_2,j_3,j_4}_{m_1,m_2,m_3,m_4})^* I^{j_1,j_2,j_3,j_4}_{m_1',m_2',m_3,m_4} (I'^{j_1,j_2,j_3,j_4}_{m_1',m_2',m_3',m_4'})^* I'^{j_1,j_2,j_3,j_4}_{m_1,m_2,m_3',m_4'}\Bigg).\label{avgZ1}
\end{alignat}
\end{widetext}
We choose the orthonormal basis $|I\rangle$ to be such that (as we did in Eq.~\eqref{4-pure})
\begin{alignat}{5}
&I^{j_1,j_2,j_3,j_4}_{m_1,m_2,m_3,m_4}\nonumber\\
&\quad{}=\sum_{M}\frac{(-1)^{I-M}}{\sqrt{2I+1}}C^{j_1,j_2,I}_{m_1,m_2,M}C^{j_3,j_4,I}_{m_3,m_4,-M}.\nonumber
\end{alignat}
It is straightforward to check the orthonormality
$\sum_{\vec{m}}(I^{j_1,j_2,j_3,j_4}_{m_1,m_2,m_3,m_4})^*\tilde{I}^{j_1,j_2,j_3,j_4}_{m_1,m_2,m_3,m_4}=\delta_{I,\tilde{I}}$. Inserting
into $\overline{Z_1}$, we find the first term in Eq.~\eqref{avgZ1}
gives
\begin{widetext}
\begin{alignat}{5}
&\sum_{I,I'}\sum_{\vec{m},\vec{m}'}(I^{j_1,j_2,j_3,j_4}_{m_1,m_2,m_3,m_4})^*I^{j_1,j_2,j_3,j_4}_{m_1,m_2,m_3',m_4'}(I'^{j_1,j_2,j_3,j_4}_{m_1',m_2',m_3',m_4'})^*I'^{j_1,j_2,j_3,j_4}_{m_1',m_2',m_3,m_4}\nonumber\\
&={}\sum_{I,I'}\sum_{\vec{m},\vec{m}'}\sum_{M,\tilde{M}}\frac{(-1)^{2I-M-\tilde{M}}}{{2I+1}}C^{j_1,j_2,I}_{m_1,m_2,M}C^{j_3,j_4,I}_{m_3,m_4,-M}C^{j_1,j_2,I}_{m_1,m_2,\tilde{M}}C^{j_3,j_4,I}_{m_3',m_4',-\tilde{M}}\nonumber\\
&\times\sum_{N,\tilde{N}}\frac{(-1)^{2I'-N-\tilde{N}}}{{2I'+1}}C^{j_1,j_2,I'}_{m_1',m_2',N}C^{j_3,j_4,I'}_{m_3',m_4',-N}C^{j_1,j_2,I'}_{m_1',m_2',\tilde{N}}C^{j_3,j_4,I'}_{m_3,m_4,-\tilde{N}}\nonumber\\
&{}=\sum_{I,I'}\sum_{M,\tilde{M}}\sum_{N,\tilde{N}}\delta_{M,\tilde{M}}\delta_{N,\tilde{N}}\delta^{I,I'}\delta_{M,\tilde{N}}\delta^{I,I'}\delta_{N,\tilde{M}}(2I+1)^{-2}\nonumber\\
&{}=\sum_{I}(2I+1)^{-1}.
\end{alignat}
\end{widetext}
The second term in Eq.~\eqref{avgZ1} gives the same result. Therefore
\begin{equation}
\overline{Z_1}=\frac{2\sum_{I}(2I+1)^{-1}}{\dim(\ch_{\text{inv}})^2+\dim(\ch_{\text{inv}})}.
\end{equation}

The average of $Z_0$ can be computed in a similar way, by removing the swap operator $\cf_{34}$
\begin{widetext}
  \begin{small}
\begin{alignat}{5}
\overline{Z_0}&{}=\frac{1}{\dim(\ch_{\text{inv}})^2+\dim(\ch_{\text{inv}})}\sum_{\vec{m},\vec{m}'}\langle\vec{m}|\otimes\langle\vec{m}'|(\ci+\cf)|\vec{m}\rangle\otimes |\vec{m}'\rangle\nonumber\\
&{}=\frac{1}{\dim(\ch_{\text{inv}})^2+\dim(\ch_{\text{inv}})}\sum_{\vec{m},\vec{m}'}
\Bigg(\sum_{I,I'}(I^{j_1,j_2,j_3,j_4}_{m_1,m_2,m_3,m_4})^* I^{j_1,j_2,j_3,j_4}_{m_1,m_2,m_3,m_4}(I'^{j_1,j_2,j_3,j_4}_{m_1',m_2',m_3',m_4'})^* I'^{j_1,j_2,j_3,j_4}_{m_1',m_2',m_3',m_4'}\nonumber\\
&\qquad\qquad{}+\sum_{I,I'} (I^{j_1,j_2,j_3,j_4}_{m_1,m_2,m_3,m_4})^* I^{j_1,j_2,j_3,j_4}_{m_1',m_2',m_3',m_4'} (I'^{j_1,j_2,j_3,j_4}_{m_1',m_2',m_3',m_4'})^* I'^{j_1,j_2,j_3,j_4}_{m_1,m_2,m_3,m_4}\Bigg)\nonumber\\
&=1.
\end{alignat}
  \end{small}
\end{widetext}

\section{Bound on Fluctuations of $S_2$}
\label{sec:bound}
In this section, we estimate the bound on fluctuation of the second
Renyi entropy $S_2$ around the average $\overline{S_2}$, under the
asymptotical limit $j\to\infty$. Using the bound, we also show that
$S_2$ concentrates at $\overline{S_2}$ with a high probability, which
is close to $1$ as $j\to\infty$. The idea of derivation is similar to
\cite{Qi1}.

We consider the fluctuation:
\begin{equation}
\frac{\overline{(Z_{1}-\overline{Z_{1}})^2}}{\overline{Z_1}^2}=\frac{\overline{Z_{1}^2}}{\overline{Z_1}^2}-1.
\end{equation}
We compute the general average $\overline{Z_{1}^2}=\overline{\trace\lt[(\rho\otimes\rho)\cf_{34}\rt]^2}$ by using the following formula \cite{Qi1,church}
\begin{equation}
\overline{\rho^{\otimes 4}}=\frac{1}{\cc_{4}}\sum_{\sig\in\mathrm{Sym}_{4}} \sig\in\ch_{\text{inv}}^{\otimes 4}\otimes \ch_{\text{inv}}^{*\otimes 4},
\end{equation}
where
$\cc_m=(\dim\ch_{\text{inv}}+m-1)!/(\dim\ch_{\text{inv}}-1)!$. The sum
is over all permutations $\sig$ acting on
$\ch_{\text{inv}}^{4}$. Inserting the above formula, we have
\begin{widetext}
\begin{equation}
\overline{Z_{1}^2}=\frac{1}{\cc_{4}}\sum_{\sig\in\mathrm{Sym}_{4}}\sum_{\vec{m}^{(i)}}\langle\vec{m}^{(1)}|\cdots\langle\vec{m}^{(4)}|\sig\cf_{34}^{\otimes 2}|\vec{m}^{(1)}\rangle\cdots|\vec{m}^{(4)}\rangle.
\end{equation}
The operation of $\sig\cf_{34}^{\otimes 2}$ gives
\begin{alignat}{5}
&\sig\bigotimes_{i=1,3}\cf_{34}|{m}_1^{(i)};{m}_2^{(i)};{m}_3^{(i)};{m}_4^{(i)}\rangle|{m}_1^{(i+1)};{m}_2^{(i+1)};{m}_3^{(i+1)};{m}_4^{(i+1)}\rangle\nonumber\\
&{}=\sum_{I^{(i)}}\sig\bigotimes_{i=1,3}|I^{(i)}\rangle\langle I^{(i)}|{m}_1^{(i)};{m}_2^{(i)};{m}_3^{(i+1)};{m}_4^{(i+1)}\rangle\otimes|I^{(i+1)}\rangle\langle I^{(i+1)}|{m}_1^{(i+1)};{m}_2^{(i+1)};{m}_3^{(i)};{m}_4^{(i)}\rangle\nonumber\\
&{}=\sum_{I^{(i)}}\sig\bigotimes_{i=1,3}|I^{(i)}\rangle I^{(i)}_{{m}_1^{(i)}{m}_2^{(i)}{m}_3^{(i+1)}{m}_4^{(i+1)}}\otimes|I^{(i+1)}\rangle I^{(i+1)}_{{m}_1^{(i+1)}{m}_2^{(i+1)}{m}_3^{(i)}{m}_4^{(i)}}\nonumber\\
&{}=\sum_{I^{(i)}}\bigotimes_{i=1,3}|I^{\sig(i)}\rangle \otimes|I^{\sig(i+1)}\rangle \prod^{2N}_{i\,\text{even}}I^{(i)}_{{m}_1^{(i)}{m}_2^{(i)}{m}_3^{(i+1)}{m}_4^{(i+1)}} I^{(i+1)}_{{m}_1^{(i+1)}{m}_2^{(i+1)}{m}_3^{(i)}{m}_4^{(i)}}.
\end{alignat}
Taking the inner product, we obtain
\begin{alignat}{5}
&\overline{Z_{1}^2}=\frac{1}{\cc_{4}}\sum_{\sig\in\mathrm{Sym}_{4}}\sum_{\vec{m}^{(i)}}\sum_{I^{(i)}}\nonumber\\
&\prod_{i=1,3}I^{*\ \sig(i)}_{{m}_1^{(i)}{m}_2^{(i)}{m}_3^{(i)}{m}_4^{(i)}} I^{*\ \sig(i+1)}_{{m}_1^{(i+1)}{m}_2^{(i+1)}{m}_3^{(i+1)}{m}_4^{(i+1)}} I^{(i)}_{{m}_1^{(i)}{m}_2^{(i)}{m}_3^{(i+1)}{m}_4^{(i+1)}} I^{(i+1)}_{{m}_1^{(i+1)}{m}_2^{(i+1)}{m}_3^{(i)}{m}_4^{(i)}}.
\end{alignat}
Summation over $\vec{m}^{{i}}$ yeilds
\begin{alignat}{5}
&\sum_{\vec{m}^{(i)},\vec{m}^{(i+1)}}I^{*\ \sig(i)}_{{m}_1^{(i)}{m}_2^{(i)}{m}_3^{(i)}{m}_4^{(i)}} I^{*\ \sig(i+1)}_{{m}_1^{(i+1)}{m}_2^{(i+1)}{m}_3^{(i+1)}{m}_4^{(i+1)}} I^{(i)}_{{m}_1^{(i)}{m}_2^{(i)}{m}_3^{(i+1)}{m}_4^{(i+1)}} I^{(i+1)}_{{m}_1^{(i+1)}{m}_2^{(i+1)}{m}_3^{(i)}{m}_4^{(i)}}\nonumber\\
&\quad{}=\sum_{\vec{m}^{(i)},\vec{m}^{(i+1)}}\sum_{M^{\sig(i)}}\frac{(-1)^{I^{\sig(i)}-M^{\sig(i)}}}{\sqrt{2I^{\sig(i)}+1}}C^{j_1,j_2,I^{\sig(i)}}_{m_1^{(i)}m_2^{(i)}M^{\sig(i)}}C^{j_3,j_4,I^{\sig(i)}}_{m_3^{(i)}m_4^{(i)}-M^{\sig(i)}}\nonumber\\
&\qquad\sum_{M^{(i)}}\frac{(-1)^{I^{(i)}-M^{(i)}}}{\sqrt{2I^{(i)}+1}}C^{j_1,j_2,I^{(i)}}_{m_1^{(i)}m_2^{(i)}M^{(i)}}C^{j_3,j_4,I^{(i)}}_{m_3^{(i+1)}m_4^{(i+1)}-M^{(i)}}\nonumber\\
&\qquad\sum_{M^{\sig(i+1)}}\frac{(-1)^{I^{\sig(i+1)}-M^{\sig(i+1)}}}{\sqrt{2I^{\sig(i+1)}+1}}C^{j_1,j_2,I^{\sig(i+1)}}_{m_1^{(i+1)}m_2^{(i+1)}M^{\sig(i+1)}}C^{j_3,j_4,I^{\sig(i+1)}}_{m_3^{(i+1)}m_4^{(i+1)}-M^{\sig(i+1)}}\nonumber\\
&\qquad\sum_{M^{(i+1)}}\frac{(-1)^{I^{(i+1)}-M^{(i+1)}}}{\sqrt{2I^{(i+1)}+1}}C^{j_1,j_2,I^{(i+1)}}_{m_1^{(i+1)},m_2^{(i+1)},M^{(i+1)}}C^{j_3,j_4,I^{(i+1)}}_{m_3^{(i)}m_4^{(i)}-M^{(i+1)}}\nonumber\\
&{}=\sum_{M^{\sig(i)},M^{(i)},M^{\sig(i+1)},M^{(i+1)}}
\frac{1}{({2I^{(i)}+1})({2I^{(i+1)}+1})}\delta^{I^{\sig(i)}I^{(i)}}\delta^{I^{\sig(i+1)}I^{(i+1)}}
\delta^{I^{\sig(i)}I^{(i+1)}}\delta^{I^{(i)}I^{\sig(i+1)}}\nonumber\\
&\qquad\qquad\times\delta_{M^{\sig(i)}M^{(i)}}\delta_{M^{\sig(i+1)}M^{(i+1)}}\delta_{M^{\sig(i)}M^{(i+1)}}\delta_{M^{(i)}M^{\sig(i+1)}}\nonumber\\
&{}=
\frac{1}{({2I^{(i)}+1})^2}\delta^{I^{\sig(i)}I^{(i)}}\delta^{I^{\sig(i+1)}I^{(i+1)}}
\delta^{I^{(i)}I^{(i+1)}}\delta^{I^{(i)}I^{(i+1)}}
\sum_{M^{(i)},M^{(i+1)}}\delta_{M^{(i)}M^{(i+1)}}\delta_{M^{(i)}M^{(i+1)}}\nonumber\\
&{}=\frac{1}{({2I^{(i)}+1})}\delta^{I^{\sig(i)}I^{(i)}}\delta^{I^{\sig(i+1)}I^{(i+1)}}
\delta^{I^{(i)}I^{(i+1)}}\delta^{I^{(i)}I^{(i+1)}}.
\end{alignat}
Inserting the result into $\overline{Z_{1}^2}$ gives
\begin{alignat}{5}
\overline{Z_{1}^2}&{}=\frac{1}{\cc_{4}}\sum_{\sig\in\mathrm{Sym}_{4}}\sum_{I^{(i)}}\prod_{i=1,3}\frac{1}{({2I^{(i)}+1})}\delta^{I^{\sig(i)}I^{(i)}}\delta^{I^{\sig(i+1)}I^{(i+1)}}
\delta^{I^{(i)}I^{(i+1)}}\nonumber\\
&{}=\frac{1}{\cc_{4}}\lt[4\lt(\sum_{I=0}^{2j}\frac{1}{2I+1}\rt)^2+3!\sum_{I=0}^{2j}\frac{1}{(2I+1)^2}\rt].
\end{alignat}
\end{widetext}

We have set $j_1=j_2=j_3=j_4=j$. As $j\to\infty$,
$\sum_{I=0}^{2j}\frac{1}{2I+1}$ is divergent, while
$\sum_{I=0}^{\infty}\frac{1}{(2I+1)^2}=\frac{\pi^2}{8}$, and
$\sum_{I=0}^{2j}\frac{1}{(2I+1)^{m\geq2}}$ are all
convergent. Therefore if we denote by
$\L\equiv\sum_{I=0}^{2j}\frac{1}{2I+1}$,
\begin{equation}
\frac{\overline{Z_{1}^2}}{\overline{Z_{1}}^2}=\frac{(\cc_2)^2}{\cc_{4}}\lt[1+\frac{3\pi^2}{16}\lt(\frac{1}{\L}\rt)^2\rt].
\end{equation}
Given that $\frac{(\cc_2)^2}{\cc_{4}}< 1$,
\begin{equation}
\overline{\lt(\frac{Z_{1}}{\overline{Z_1}}-1\rt)^2}=\frac{\overline{Z_{1}^2}}{\overline{Z_{1}}^2}-1<\frac{3\pi^2}{16}\lt(\frac{1}{\L}\rt)^2
\end{equation}
By Markov's inequality,
\begin{alignat}{5}
&\mathrm{Prob}\lt(\lt|\frac{Z_{1}}{\overline{Z_1}}-1\rt|\geq\frac{\delta}{4}\rt)\nonumber\\
&\leq \frac{\overline{\lt(\frac{Z_{1}}{\overline{Z_1}}-1\rt)^2}}{\lt(\frac{\delta}{4}\rt)^2}<\frac{3\pi^2}{\delta^2\L^2}.\label{prob1}
\end{alignat}

On the other hand, we can also show that
\begin{widetext}
\begin{alignat}{5}
\overline{Z_{0}^2}&{}=\frac{1}{\cc_{4}}\sum_{\sig\in\mathrm{Sym}_{4}}\sum_{I^{(i)}}\delta^{I^{\sig(1)},I^{(1)}}\delta^{I^{\sig(2)},I^{(2)}}\delta^{I^{\sig(3)},I^{(3)}}\delta^{I^{\sig(4)},I^{(4)}}\nonumber\\
&{}\simeq \frac{1}{\cc_{4}}\lt[(2j+1)^4+6(2j+1)^3+10(2j+1)^2+7(2j+1)\rt].
\end{alignat}
\end{widetext}

Therefore as $j\to\infty$
\begin{equation}
\overline{\lt(\frac{Z_{0}}{\overline{Z_0}}-1\rt)^2}<\frac{3}{j}+O(j^{-2})
\end{equation}
and
\begin{alignat}{5}
&\mathrm{Prob}\lt(\lt|\frac{Z_{0}}{\overline{Z_0}}-1\rt|\geq\frac{\delta}{4}\rt)\nonumber\\
&<\frac{16}{\delta^2}\lt[\frac{3}{j}+O(j^{-2})\rt]<\frac{3\pi^2}{\delta^2\L^2}.\label{prob2}
\end{alignat}

The bounds Eq.~\eqref{prob1} and \eqref{prob2} imply that with the probably of at least $1-\frac{3\pi^2}{\delta^2\L^2}$, we have $\lt|\frac{Z_{0,1}}{\overline{Z_0,1}}-1\rt|\leq\frac{\delta}{4}$. Then we have
\begin{alignat}{5}
\lt|S_2-\overline{S_2}\rt|&{}=\lt|\ln\frac{{Z_1}}{{Z_0}}-\ln\frac{\overline{Z_1}}{\overline{Z_0}}\rt|\nonumber\\
&{}=\lt|\ln\frac{{Z_1}}{\overline{Z_1}}-\ln\frac{{Z_0}}{\overline{Z_0}}\rt|\nonumber\\
&{}\leq\lt|\ln\frac{{Z_1}}{\overline{Z_1}}\rt|+\lt|\ln\frac{{Z_0}}{\overline{Z_0}}\rt|\nonumber\\
&{}\leq\frac{\delta}{2}+\frac{\delta}{2}=\delta,
\end{alignat}
where we have used that for $\delta\leq 2$, $|\ln(1\pm \delta/4)|\leq \delta/2$.

Therefore we have shown that for any small $\delta>0$, there is a large probability
\begin{equation}
P_\delta=1-\frac{3\pi^2}{\delta^2\L^2},
\end{equation}
which is close to 1 as $j\to\infty$ (since
$\L=\sum_{I=0}^{2j}\frac{1}{2I+1}\to\infty$), such that
$\lt|S_2-\overline{S_2}\rt|\leq\delta$, i.e., the second Renyi entropy
is close to the average value $\overline{S_2}\sim\ln\lt[(2j+1)^2\rt]$.

\end{document}